\newcommand{\cG}{\mathcal{G}}
\newcommand{\cH}{\mathcal{H}}
\newcommand{\cI}{\mathcal{I}}
\newcommand{\cJ}{\mathcal{J}}
\newcommand{\cK}{\mathcal{K}}
\newcommand{\cL}{\mathcal{L}}
\newcommand{\cS}{\mathcal{S}}
\def \diracspacing {0.7pt}
\newcommand{\ketbra}[2]{| \hspace{\diracspacing} #1 \rangle \langle #2 \hspace{\diracspacing} |} 
\newcommand{\ketbraq}[1]{\ketbra{#1}{#1}} 
\newcommand{\Id}{\mathbb{I}}
\newcommand{\I}{\mathbb{I}}
\DeclareMathOperator{\tr}{tr}
\newtheorem{theorem}{Theorem}
\newtheorem{proposition}{Proposition}
\newtheorem{definition}{Definition}
\newtheorem{example}{Example}
\newtheorem{observation}{Observation}
\begin{document}

\title{On the compatibility of quantum instruments}
\author{Arindam Mitra}
\email{amitra@imsc.res.in}
\affiliation{Optics and Quantum Information Group, The Institute of Mathematical Sciences,
C. I. T. Campus, Taramani, Chennai 600113, India}
\affiliation{Homi Bhabha National Institute, Training School Complex, Anushaktinagar, Mumbai 400094, India}

\author{Máté Farkas}
\email{mate.farkas@icfo.eu}
\affiliation{ICFO-Institut de Ciencies Fotoniques, The Barcelona Institute of Science and Technology, 08860 Castelldefels, Spain}

\date{\today}

\begin{abstract}

Incompatibility of quantum devices is a useful resource in various quantum information theoretical tasks, and it is at the heart of some fundamental features of quantum theory. While the incompatibility of measurements and quantum channels is well-studied, the incompatibility of quantum instruments has not been explored in much detail. In this work, we revise a notion of instrument compatibility introduced in the literature that we call traditional compatibility. Then, we introduce the new notion of parallel compatibility, and show that these two notions are inequivalent. Then, we argue that the notion of traditional compatibility is incomplete, and prove that while parallel compatibility captures measurement and channel compatibility, traditional compatibility does not. Hence, we propose parallel compatibility as the conceptually complete definition of compatibility of quantum instruments.
\end{abstract}

\maketitle

\section{Introduction}

Incompatibility (the lack of compatibility) is one of the features of quantum theory that sets it apart from classical physics \cite{heino-review}. Intuitively, two quantum devices are compatible if there exists a joint device such that implementing the joint device is equivalent to simultaneously implementing the two original devices. While incompatibility may at first sound like a drawback, in fact the incompatibility of quantum measurements leads to practical advantages in various quantum information processing tasks \cite{wootters-Field,Cerf,Heino-qrac,Uola-steering}. From the foundational point of view, the incompatibility of quantum channels is intimately linked to the well-known no-cloning theorem \cite{Heino-Incompa-chan,Wootters}, and the incompatibility of the identity channel and a non-trivial measurement is linked to the uncertainty principle \cite{Heino-obs-chan}.

While the incompatibility of measurements and quantum channels is well-studied, much less effort has been designated to the study of the incompatibility of quantum \textit{instruments}, a more general class of quantum devices capturing measurement processes in their full detail. In this work, we review a definition of instrument compatibility used in the literature (which we call \textit{traditional} compatibility), and address its conceptual adequacy and its relation to measurement and channel compatibility. We then define a new notion of instrument compatibility (which we call \textit{parallel} compatibility) and argue that this notion is conceptually more in line with the well-established notions of measurement and channel compatibility. We further prove that parallel compatibility captures measurement and channel compatibility in a well-defined manner, while traditional compatibility cannot capture channel compatibility. We therefore propose to adapt the notion of parallel compatibility instead of traditional compatibility.

The rest of this paper is organised as follows. In Sec.~\ref{Sec:preli}, we discuss the mathematical background, various quantum devices and their compatibility. In Sec.~\ref{sec:main}, we discuss the compatibility of instruments. In particular, in Sec.~\ref{subsec:main_1}, we review the definition of traditional compatibility, introduce the concept of parallel compatibility, and show that these two notions are conceptually different. In Sec.~\ref{subsec:main_2}, we argue that the traditional definition of compatibility of instruments is conceptually incomplete---unlike parallel compatibility---and show that it cannot capture channel compatibility. In Sec.~\ref{subsec:main_3}, we prove that parallel compatibility of quantum instruments can capture the idea of measurement compatibility, channel compatibility and measurement-channel compatibility. We conclude in Sec.~\ref{sec:conc}, and lay out potential connections of parallel compatibility to certain information-theoretic tasks.

\section{Preliminaries}\label{Sec:preli}

In this section, we discuss the mathematical background, different types of quantum devices and their compatibility. In general, in quantum theory to every physical system there is an associated Hilbert space, $\cH$, which we assume to be finite dimensional. The set of linear operators on $\cH$ is denoted by $\cL(\cH)$, the set of positive linear operators on $\cH$ is denoted by $\cL^+(\cH)$,  and a quantum state is described by a positive semidefinite operator, $\rho \geq 0$, with unit trace. The convex set of all states on $\cH$ is denoted by $\cS(\cH)$.

\subsection{Measurements}

Measurements can be thought of as quantum devices that take a quantum state as an input and produce a classical output (the measurement outcome). Mathematically, measurements are described by positive operator-valued measures (POVMs), which in the $n$-outcome case correspond to a set of $n$ positive semidefinite operators, $A = \{A(x)\}_{x=1}^n$ such that $\sum_x A(x) = \I$, where $\I$ is the identity operator. In the following, we will denote the outcome set of $A$ by $\Omega_A$.

\subsection{Quantum channels}

Quantum channels map quantum states to quantum states, that is, they are devices with a quantum input and a quantum output. Mathematically, quantum channels are described by completely positive trace-preserving (CPTP) maps $\Lambda:\cS(\cH)\rightarrow\cS(\cK)$ \cite{Nielsen,Heino-book}. A useful characterisation of quantum channels is the \textit{Kraus representation}, i.e., any quantum channel $\Lambda$ can be written as $\Lambda(\rho)=\sum_i K_i\rho K^{\dagger}_i$ where the $K_i$'s are linear operators called the Kraus operators, and they satisfy $\sum_iK_i^{\dagger}K_i=\Id$.

The dual of a map $\Lambda: \cS(\cH) \to \cS(\cK)$ is a map $\Lambda^\ast: \cL(\cK) \to \cL(\cH)$ such that $\tr[ \Lambda(X) Y ] = \tr[ X \Lambda^\ast(Y) ]$ for all $X \in \cL(\cH)$ and $Y \in \cL(\cK)$. It is easy to show that the dual of a CPTP map is a completely positive (CP) unital map, i.e., a CP map that maps the identity to the identity. The following observation on dual channels will be useful later:
\begin{observation}\label{obs:marginal_dual}
Let $\Gamma:\cS(\cH)\rightarrow\cL^+(\cK_1\otimes\cK_2)$ be a CP map and  $\Lambda:\cS(\cH)\rightarrow\cL^+(\cK_1)$ be a CP map such that $\Lambda(\rho)=\tr_{\cK_2}[\Gamma(\rho)]$ for all $\rho \in \cS(\cH)$. Then $\tr[\Lambda(\rho)]=\tr[\Gamma(\rho)]$, and thus $\tr[\rho\Lambda^\ast (\Id_{\cK_1})]=\tr[\rho\Gamma^\ast (\Id_{\cK_1\otimes\cK_2})]$ for all $\rho \in \cS(\cH)$. Using the fact that the dual map of a CP map is CP and that one can always find a positive semidefinite basis of Hermitian matrices, we conclude that $\Lambda^\ast (\Id_{\cK_1})=\Gamma^\ast (\Id_{\cK_1\otimes\cK_2})$.
\end{observation}

\subsection{Quantum instruments}

Quantum instruments simultaneously generalise measurements and quantum channels: they take a quantum state as an input and provide both a classical and a quantum output. One may think of a quantum instrument as a measurement process, by associating the classical output with the measurement outcome, and the quantum output with the post-measurement state. Mathematically, a quantum instrument $\cI$ is defined as a set of CP maps $\{\Phi_x:\cS(\cH)\rightarrow\cL^+(\cK)\}$ such that $\Phi^{\cI} \equiv \sum_x\Phi_x$ is a CPTP map \cite{Heino-book}. Given a quantum state $\rho$, the classical output of the instrument is $x$ and the quantum output is $\Phi_x(\rho)$, both with probability $\tr[ \Phi_x(\rho)]$.

Given a measurement $A$, we say that the above instrument is $A$\textit{-compatible} if $\tr[\Phi_x(\rho)]=\tr[\rho A(x)]$ for all $\rho \in \cS(\cH)$. Note that for every instrument $\cI = \{\Phi_x\}$, there exists a unique measurement $A$, such that $\cI$ is $A$-compatible. Indeed, we have that $\tr[ \Phi_x(\rho) ] = \tr[ \rho \Phi^\ast_x( \Id ) ]$. Thus, defining $A(x) \equiv \Phi^\ast_x( \Id )$, we have that $\tr[ \Phi_x(\rho) ] = \tr[ A(x) \rho ]$, and this $A(x)$ is unique, positive semidefinite and $\sum_x A(x) = \Id$, which follows from the fact that the dual of a CPTP map is a CP unital map. For more results regarding quantum instruments, we refer the reader to Refs.~\cite{Davies,Pellonpaa-1,Pellonpaa-2, Leevi,lever,Gudder-1,Gudder-2,Heino-Coexist,Gudder-3,Chiribella,Holevo,Ozawa-1, Dressel,Ozawa-2,Heino-stro}. 


\subsection{Three kinds of compatibility in quantum theory}\label{subsec:3comp}

One possible definition of compatibility of quantum devices is that they can be performed jointly. That is, a pair of devices is compatible if there exists a joint device, such that applying the joint device reproduces \textit{both} of the outcomes of the compatible devices. If two devices are not compatible, we say that they are \textit{incompatible}. Arguably, the most studied notions of compatibility in quantum theory are the following \cite{heino-review}:

1.~\emph{Measurement compatibility:}
Two measurements $A=\{A(x)\}$ and $B=\{B(y)\}$ are compatible if there exists a measurement $\cG=\{G(x,y)\}$ with outcome set $\Omega_{\cG}=\Omega_A\times \Omega_B$ such that
\begin{equation}
A(x)=\sum_yG(x,y);~ B(y)=\sum_xG(x,y)
\end{equation}
for all $x\in\Omega_A$ and $y\in\Omega_B$. Through measuring $G$, one can simultaneously recover the outputs of both $A$ and $B$. That is, the distribution $p(x,y) \equiv \tr[ G(x,y) \rho ]$ is a joint distribution of $p(x) \equiv \tr[ A(x) \rho ]$ and $p(y) \equiv \tr[ B(y) \rho]$ for all $\rho$.
 
2.~\emph{Channel compatibility:}
Two quantum channels $\Lambda_1:\cS(\cH)\rightarrow\cS(\cK_1)$ and $\Lambda_2:\cS(\cH)\rightarrow\cS(\cK_2)$ are compatible if there exists a quantum channel $\Lambda:\cS(\cH)\rightarrow\cS(\cK_1\otimes\cK_2)$ such that $\Lambda_1(\rho)=\tr_{\cK_2}[\Lambda(\rho)]$ and $\Lambda_2(\rho)=\tr_{\cK_1}[\Lambda(\rho)]$  for all $\rho\in\cS(\cH)$. Through implementing the channel $\Lambda$, one can simultaneously recover the outputs of both $\Lambda_1$ and $\Lambda_2$. That is, $\Lambda(\rho)$ is a joint state of $\Lambda_1(\rho)$ and $\Lambda_2(\rho)$ for all $\rho$.

3.~\emph{Measurement-channel compatibility:}
A measurement $A=\{A(x)\}$ acting on the Hilbert space $\cH$ and a quantum channel $\Lambda:\cS(\cH)\rightarrow\cS(\cK)$ are compatible if there exists a quantum instrument $\cI=\{\Phi_x:\cS(\cH)\rightarrow\cL^+(\cK)\}$ such that $\tr[\Phi_x(\rho)]=\tr[\rho A(x)]$ for all $x\in\Omega_A$ and $\rho\in\cS(\cH)$ and $\sum_x\Phi_x=\Lambda$. Through implementing the quantum instrument $\cI$, one can simultaneously recover the outputs of both $A$ and $\Lambda$.

\section{Compatibility of quantum Instruments}\label{sec:main}

\subsection{Definitions and concepts}\label{subsec:main_1}

While the compatibility of instruments has been studied in far less detail than that of measurements or channels, there is an existing definition in the literature, which we refer to as the \textit{traditional} definition:

\begin{definition}[Traditional compatibility]
Two quantum instruments $\cI_1=\{\Phi^{1}_x:\cS(\cH)\rightarrow\cL^+(\cK)\}$ and $\cI_2=\{\Phi^2_y:\cS(\cH)\rightarrow\cL^+(\cK)\}$ are (traditionally) compatible if there exists an instrument $\cI=\{\Phi_{xy}:\cS(\cH)\rightarrow\cL^+(\cK)\}$ such that $\sum_y\Phi_{xy}=\Phi^{1}_x$ and $\sum_x\Phi_{xy}=\Phi^{2}_y$ for all $x,y$.\label{def:Trad _compa}
\end{definition}

This definition appears in Ref.~\cite[Definition~3]{Heino-stro}, and in
Ref.~\cite[Definition~2.5]{lever}. The same definition is given in
Ref.~\cite[page~15]{Gudder-1}, under the name ``coexistence''. Notice that traditional compatibility can only be defined for instruments with the same quantum output space. Intuitively, the joint instrument $\cI$ in Definition \ref{def:Trad _compa} reproduces both of the classical outputs $x$ and $y$ of $\cI_1$ and $\cI_2$, and by classical post-processing, one can recover \textit{either} one of the two quantum outputs $\Phi^1_x(\rho)$ or $\Phi^2_y(\rho)$. For a schematic representation of the joint instrument, see Fig.~\ref{fig.trad_comp}.

A concept related to traditional compatibility is that of \textit{weak} compatibility.

\begin{definition}[Weak compatibility]
Two quantum instruments $\cI_1=\{\Phi^{1}_x:\cS(\cH)\rightarrow\cL^+(\cK)\}$ and $\cI_2=\{\Phi^2_y:\cS(\cH)\rightarrow\cL^+(\cK)\}$ are weakly compatible if there exists a quantum channel $\Lambda: \cS(\cH) \to \cS(\cK)$ such that $\sum_x\Phi^{1}_{x}=\sum_y\Phi^{2}_{y}=\Lambda$.
\end{definition}

It is known that if a set of instruments is compatible then it is also weakly compatible, but it is easily seen that converse is not true in general \cite{lever}.

Here, we propose a new definition of instrument compatibility, which we refer to as \textit{parallel} compatibility.

\begin{definition}[Parallel compatibility]
Two quantum instruments $\cI_1=\{\Phi^{1}_x:\cS(\cH)\rightarrow\cL^+(\cK_1)\}$ and $\cI_2=\{\Phi^2_y:\cS(\cH)\rightarrow\cL^+(\cK_2)\}$ are parallelly compatible if there exists an instrument  $\cI=\{\Phi_{xy}:\cS(\cH)\rightarrow\cL^+(\cK_1\otimes\cK_2)\}$ such that $\sum_y\tr_{\cK_2}\Phi_{xy}=\Phi^{1}_x$ and $\sum_x\tr_{\cK_1}\Phi_{xy}=\Phi^{2}_y$ for all $x,y$.\label{Def.par_comp}
\end{definition}

Notice that parallel compatibility can be defined for instruments with arbitrary quantum output spaces.  Intuitively, the joint instrument $\cI$ reproduces both of the classical outputs $x$ and $y$ of $\cI_1$ and $\cI_2$, and \textit{both} of the quantum outputs $\Phi^1_x(\rho)$ and $\Phi^2_y(\rho)$ on a tensor product Hilbert space. One can recover these quantum outputs by classical post-processing (i.e., by taking marginal). Furthermore, if  $\cI_1=\{\Phi^{1}_x:\cS(\cH)\rightarrow\cL^+(\cK_1)\}$ and $\cI_2=\{\Phi^2_y:\cS(\cH)\rightarrow\cL^+(\cK_2)\}$ are parallelly compatible with the joint instrument $\cI = \{ \Phi_{xy}: \cS(\cH) \rightarrow \cL^+(\cK_1 \otimes \cK_2) \}$, then the channels $\Phi^1 \equiv \sum_x\Phi^1_x$ and $\Phi^2 \equiv \sum_y\Phi^2_y$ are compatible with the joint channel $\Phi \equiv \sum_{xy}\Phi_{xy}$.

For later convenience, we provide an alternative (but equivalent) definition of parallel compatibility.

\begin{definition}\label{def:parallel_equiv}
Two quantum instruments $\cI_1  =\{ \Phi^1_x : \cS(\cH) \to \cL^+(\cK_1) \}$ and $\cI_2  =\{ \Phi^2_x : \cS(\cH) \to \cL^+(\cK_2) \}$ are parallelly compatible if there exists a quantum instrument $\cI = \{ \Phi_z : \cS(\cH) \to \cL^+(\cK_1 \otimes \cK_2) \}$ such that $\Phi^1_x = \sum_z p_1(x|z) \tr_{\cK_2} \Phi_z$ and $\Phi^2_y = \sum_z p_2(y|z) \tr_{\cK_1} \Phi_z$, where $p_1$ and $p_2$ are conditional probability distributions.
\end{definition}

\begin{proposition}
Definition~\ref{def:parallel_equiv} is equivalent to Definition~\ref{Def.par_comp}. \label{proposi:def-equi}
\end{proposition}

\begin{proof}
The proof is completely analogous to the related proof of equivalent definitions of observable compatibility in \cite[Eqs.~(15)--(17)]{heino-review}.
First, it is clear that Definition \ref{Def.par_comp} is a special case of Definition \ref{def:parallel_equiv}. This can be easily understood by taking $z=(x^{\prime},y^{\prime})$, $p_1(x|z)=\delta_{x^{\prime},x}$ and $p_2(y|z)=\delta_{y^{\prime},y}$ where $\delta_{i,j}$ represents the well-known Kronecker delta function. Hence, we just need to show that if a joint instrument $\cI = \{ \Phi_z : \cS(\cH) \to \cL^+(\cK_1 \otimes \cK_2) \}$ such as the one in Definition \ref{def:parallel_equiv} exists, then there also exists a joint instrument as in Definition \ref{Def.par_comp}. In particular, pick
\begin{equation}
\cI' = \{ \Phi'_{xy} : \cS(\cH) \to \cL^+(\cK_1 \otimes \cK_2)\}
\end{equation}
with $\Phi'_{xy} = \sum_z p_1(x|z)p_2(y|z) \Phi_z$ $\forall x, y$. One can readily check that this is a valid instrument and that $\Phi^1_x = \sum_y \tr_{\cK_2} \Phi'_{xy}$ and $\Phi^2_y = \sum_x \tr_{\cK_1} \Phi'_{xy}$.
\end{proof}

We further illuminate the concept of parallel compatibility through an example and the accompanying figure, Fig.~\ref{fig.par_comp}.
\begin{figure}

\begin{subfigure}{0.5\textwidth}
\begin{center}
\begin{tikzpicture}
\draw [-stealth](2.16,2.5) -- (2.16,1.83);
\draw (2.5,2.18) node {$\rho$};
\draw (2.15,1.55) node {$\cI$};
\draw[black, very thick] (1.7,1.2) rectangle (2.6,1.8);
\draw[black, very thick] (0,0) rectangle (4.3,1.2);
\draw (2,0.7) node {Classical post-processing};
\draw (2,0.5) node {of outcomes};
\draw[black, very thick] (0,0) rectangle (1,-1);
\draw (0.5,-0.5) node {$\cI_1$};
\draw (0.502,-1) -- (0.502,-1.26);
\draw[black, very thick] (3.3,0) rectangle (4.3,-1);
\draw (3.8,-0.5) node {$\cI_2$};
\draw (3.802,-1) -- (3.802,-1.26);
\draw (0.502,-1.26) -- (3.802,-1.26);
\draw [-stealth](2.16,-1.26) -- (2.16,-1.98);

\end{tikzpicture}
\end{center}
\caption{\textbf{Traditional compatibility:} Schematic representation of Definition \ref{def:Trad _compa}. Recovering the quantum output of \textit{either} $\cI_1$ or $\cI_2$ can be done by first implementing the joint instrument $\cI$ on the state $\rho$ and then performing the post-processing of outcomes i.e., taking the marginal over either $x$ or $y$. The downward arrows represent quantum systems. Clearly, in this case there is only one output quantum system.}\label{fig.trad_comp}
\end{subfigure}

\begin{subfigure}{0.5\textwidth}
\begin{center}
\begin{tikzpicture}
\draw [-stealth](2.16,2.06) -- (2.16,1.21);
\draw (2.5,1.7) node {$\rho$};
\draw[black, very thick] (0,0) rectangle (4.3,1.2);
\draw (2,0.7) node {Joint channel $\Lambda$};
\draw (2,0.43) node {(Approx. asymm. cloning)};
\draw[black, very thick] (0,0) rectangle (1,-1);
\draw (0.5,-0.5) node {$\Lambda_1$};
\draw [-stealth](0.502,-1) -- (0.502,-1.7);
\draw [-stealth](-0.5,0) -- (0,0);
\draw[black] (-0.5,-1.34) -- (-0.7,-1.34);
\draw (-0.98,-1.34) node {$\cI_1$};
\draw[black] (-0.5,0) -- (-0.5,-2.7);
\draw [-stealth](-0.5,-2.7) -- (0,-2.7);
\draw[black, very thick] (3.3,0) rectangle (4.3,-1);
\draw (3.8,-0.5) node {$\Lambda_2$};
\draw [-stealth](3.802,-1) -- (3.802,-1.7);
\draw [-stealth](4.9,0) -- (4.3,0);
\draw[black] (4.9,0) -- (4.9,-2.7);
\draw[black] (4.9,-1.34) -- (5.2,-1.34);
\draw (5.43,-1.34) node {$\cI_2$};
\draw [-stealth](4.9,-2.7) -- (4.3,-2.7);
\draw[black, very thick] (0,-1.7) rectangle (1,-2.7);
\draw (0.5,-2.25) node {$\cJ_1$};
\draw [-stealth](0.502,-2.7) -- (0.502,-3.4);
\draw[black, very thick] (3.3,-1.7) rectangle (4.3,-2.7);
\draw (3.8,-2.25) node {$\cJ_2$};
\draw [-stealth](3.802,-2.7) -- (3.802,-3.4);
\end{tikzpicture}
\end{center}
\caption{\textbf{Parallel compatibility: } An example of parallel simultaneous implementation of two instruments (according to Definition \ref{Def.par_comp}), corresponding to Example \ref{example:parallel_compatibility}. The simultaneous implementation of $\cI_1$ and $\cI_2$ can be done through the following steps: (i) implementing the channel $\Lambda$ on the state $\rho$ which is the joint channel of the compatible channels $\Lambda_1$ and $\Lambda_2$ (where $\Lambda_1(\rho)$ and $\Lambda_2(\rho)$ can be considered as the approximate unequal clones (unless $\Lambda_1=\Lambda_2$) of the state $\rho$, in general and therefore, it can be considered as approximate asymmetric cloning), and then (ii) applying the instruments $\cJ_1$ and $\cJ_2$ on $\Lambda_1(\rho)$ and $\Lambda_2(\rho)$ respectively, such that $\cJ_1\circ\Lambda_1=\cI_1$ and $\cJ_2\circ\Lambda_2=\cI_2$. The existence of such a channel $\Lambda$ and such instruments $\cJ_1$ an $\cJ_2$ implies the parallel compatibility of the instruments $\cI_1$ and $\cI_2$, as explained in Example \ref{example:parallel_compatibility}. The downward arrows represent quantum systems. Clearly, in this case there are two output quantum systems.}\label{fig.par_comp}
\end{subfigure}
\caption{Schematic representation of joint instruments for traditionally (Fig.~\ref{fig.trad_comp}) and parallelly (Fig.~\ref{fig.par_comp}) compatible instruments.}
\end{figure}
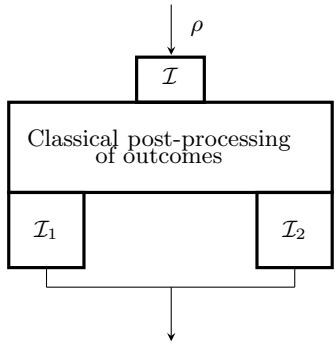
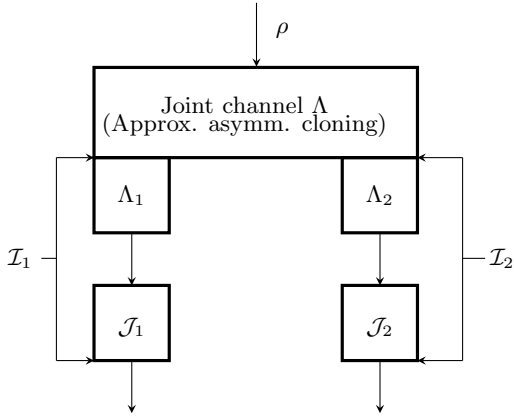

\begin{example}[An example of parallelly compatible instruments]\label{example:parallel_compatibility}
Consider two compatible quantum channels, $\Lambda_1:\cS(\cH)\rightarrow\cS(\cH_1)$ and $\Lambda_2:\cS(\cH)\rightarrow\cS(\cH_2)$ with the joint channel $\Lambda:\cS(\cH)\rightarrow\cS(\cH_1\otimes\cH_2)$. Therefore, since from the no-signaling principle, implementation of local quantum channel on one side does not change the density matrix on the other side, for any arbitrary quantum channel $\Gamma_1:\cS(\cH_1)\rightarrow\cS(\cK_1)$ and $\Gamma_2:\cS(\cH_2)\rightarrow\cS(\cK_2)$, we have that $\tr_{\cK_2}(\Id\otimes\Gamma_2)\circ\Lambda=\Lambda_1$ and  $\tr_{\cK_1}(\Gamma_1\otimes\Id)\circ\Lambda=\Lambda_2$.

 Now consider a pair of arbitrary quantum instruments  $\cJ_1=\{\Phi^{\prime 1}_x:\cS(\cH_1)\rightarrow\cL^+(\cK_1);~\sum_x\Phi^{\prime 1}_x=\Gamma_1\}$ and $\cJ_2=\{\Phi^{\prime 2}_y:\cS(\cH_2)\rightarrow\cL^+(\cK_2);~\sum_y\Phi^{\prime 2}_y=\Gamma_2\}$. Consider another pair of instruments $\cI_1=\cJ_1\circ\Lambda_1=\{\Phi^1_x=\Phi^{\prime 1}_x\circ\Lambda_1:\cS(\cH)\rightarrow\cL^+(\cK_1)\}$ and $\cI_2=\cJ_2\circ\Lambda_2=\{\Phi^2_y=\Phi^{\prime 2}_y\circ\Lambda_2:\cS(\cH)\rightarrow\cL^+(\cK_2)\}$. Then, we show that the instrument $\cI=\{\Phi_{xy}=(\Phi^{\prime 1}_x\otimes\Phi^{\prime 2}_y)\circ\Lambda:\cS(\cH)\rightarrow\cL^+(\cK_1\otimes\cK_2)\}$ is a joint instrument of $\cI_1$ and $\cI_2$.
 
 Clearly, for all $x$
\begin{align}
\Phi^1_x&=\Phi^{\prime 1}_x\circ\Lambda_1\nonumber\\
 &=\Phi^{\prime 1}_x\circ\tr_{\cK_2}(\Id\otimes \Gamma_2)\circ\Lambda\nonumber\\
 &=\Phi^{\prime 1}_x\circ\tr_{\cK_2}(\Id\otimes \sum_y\Phi^{\prime 2}_y)\circ\Lambda\nonumber\\
 &=\tr_{\cK_2}\sum_y(\Phi^{\prime 1}_x\otimes \Phi^{\prime 2}_y)\circ\Lambda\nonumber\\
 &=\sum_{y}\tr_{\cK_2}\Phi_{xy}
 \end{align}
Similarly, $\Phi^2_y=\sum_{x}\tr_{\cK_1}\Phi_{xy}$ for all $x$.  Hence, $\cI_1$ and $\cI_2$ are parallelly compatible with the joint instrument $\cI$.
\end{example}

Next, we show that the notion of traditional compatibility of instruments and that of parallel compatibility of instruments are conceptually different.

\begin{proposition}\label{proposi:para_n_tradi}
There exist pairs of quantum instruments which are parallelly compatible, but not traditionally compatible.
\end{proposition}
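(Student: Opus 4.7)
The plan is to exploit a structural difference between the two notions of compatibility: traditional compatibility forces the two instruments to share the same overall channel (i.e.\ weak compatibility), whereas parallel compatibility only requires the two induced channels to be compatible as channels. So I would separate two instruments whose induced channels are compatible but distinct.

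First I would record the easy necessary condition: if $\cI_1=\{\Phi^1_x\}$ and $\cI_2=\{\Phi^2_y\}$ are traditionally compatible via $\cI=\{\Phi_{xy}\}$, then summing $\sum_y\Phi_{xy}=\Phi^1_x$ over $x$ and $\sum_x\Phi_{xy}=\Phi^2_y$ over $y$ both give $\sum_{xy}\Phi_{xy}$, hence $\sum_x\Phi^1_x=\sum_y\Phi^2_y$. In other words, traditional compatibility implies weak compatibility. Thus it suffices to exhibit a parallelly compatible pair whose induced channels differ.

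Next I would invoke Example \ref{example:parallel_compatibility}. Pick any two compatible but distinct channels $\Lambda_1,\Lambda_2:\cS(\cH)\to\cS(\cK)$ sharing output space $\cK$; for a concrete choice take two constant channels $\Lambda_i(\rho)=\tr(\rho)\,\sigma_i$ for fixed $\sigma_1\neq\sigma_2$, which are always compatible via $\Lambda(\rho)=\tr(\rho)\,\sigma_1\otimes\sigma_2$. Apply the recipe of Example \ref{example:parallel_compatibility} with arbitrary instruments $\cJ_1,\cJ_2$ (even the trivial single-outcome instruments $\cJ_i=\{\mathrm{id}_{\cK}\}$ suffice) to obtain $\cI_1=\cJ_1\circ\Lambda_1$ and $\cI_2=\cJ_2\circ\Lambda_2$, which are parallelly compatible by that example. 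By construction $\sum_x\Phi^1_x=\Lambda_1\neq\Lambda_2=\sum_y\Phi^2_y$, violating weak compatibility and therefore ruling out traditional compatibility.

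There is no real obstacle here; the only subtlety is picking the example so that the output spaces are identical ($\cK_1=\cK_2$), so that one cannot dismiss the failure of traditional compatibility as a mere type mismatch in the definition. The proposition then follows immediately from the weak-compatibility obstruction.
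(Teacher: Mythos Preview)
Your proposal is correct and follows essentially the same route as the paper: invoke Example~\ref{example:parallel_compatibility} to produce a parallelly compatible pair, arrange for the induced channels to differ, and conclude via the implication traditional $\Rightarrow$ weak compatibility. Your version is slightly more detailed (you spell out the weak-compatibility argument, give explicit constant channels, and take care that $\cK_1=\cK_2$), but the underlying idea is identical to the paper's proof.
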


\begin{proof}
In Example \ref{example:parallel_compatibility}, $\Gamma_1$ and $\Gamma_2$ can be arbitrary and therefore, $\Gamma_1\circ\Lambda_1$ and $\Gamma_2\circ\Lambda_2$ are not equal in general. Therefore, for the case where $\Gamma_1\circ\Lambda_1\neq \Gamma_2\circ\Lambda_2$, $\cI_1$ and $\cI_2$ are parallelly compatible, but not weakly compatible and therefore not traditionally compatible.
\end{proof}

\begin{proposition}\label{proposi:tradi_n_para}
There exist pairs of quantum instruments which are traditionally compatible, but not parallelly compatible.
\end{proposition}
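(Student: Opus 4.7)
The plan is to reduce non-parallel-compatibility to the no-broadcasting theorem. The crucial observation, which the author himself records directly below Definition \ref{Def.par_comp}, is that parallel compatibility of $\cI_1$ and $\cI_2$ forces the two sum channels $\Phi^1=\sum_x\Phi^1_x$ and $\Phi^2=\sum_y\Phi^2_y$ to be channel-channel compatible: if $\cI=\{\Phi_{xy}\}$ realises the parallel compatibility, then $\Phi=\sum_{xy}\Phi_{xy}$ is a CPTP map whose two marginals $\tr_{\cK_2}\Phi$ and $\tr_{\cK_1}\Phi$ are exactly $\Phi^1$ and $\Phi^2$. Hence, whenever the common sum channel of a traditionally compatible pair fails to be broadcastable (self-compatible as a channel), that pair cannot be parallelly compatible.

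With this in mind, I would take the most economical witness. Let $\cH$ have $\dim\cH\ge 2$ and set $\cI_1=\cI_2=\{\Phi_1=\mathrm{id}_\cH\}$, the single-outcome identity instrument. Traditional compatibility is immediate: the single-outcome joint instrument $\{\Phi_{11}=\mathrm{id}_\cH\}$ trivially satisfies both marginal equations in Definition \ref{def:Trad _compa}. On the other hand, parallel compatibility would require a CPTP map $\Phi:\cS(\cH)\to\cS(\cH\otimes\cH)$ such that $\tr_{\cH_2}\Phi(\rho)=\rho$ and $\tr_{\cH_1}\Phi(\rho)=\rho$ for every state $\rho\in\cS(\cH)$; that is, a universal broadcaster for the identity channel. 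This is ruled out by the no-broadcasting theorem, or, restricting to pure inputs, directly by the no-cloning theorem applied to any pair of non-orthogonal pure states.

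There is essentially no technical obstacle once the marginal observation above is recorded; the only care needed is to present the candidate instruments so that traditional compatibility is manifestly trivial while parallel compatibility immediately collapses into a known impossibility. More elaborate multi-outcome witnesses can be built from the same template by taking any traditionally compatible pair whose common sum channel is a non-broadcastable channel (for example, the identity on a higher-dimensional space decorated with a nontrivial Lüders-type refinement that still integrates to the identity); but the single-outcome identity already suffices to establish Proposition \ref{proposi:tradi_n_para}.
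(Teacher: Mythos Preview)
Your proposal is correct and follows essentially the same approach as the paper: both exploit the observation that parallel compatibility forces channel-channel compatibility of the sum channels, and both pick instruments whose common sum channel is the identity so that no-cloning rules out parallel compatibility. The only cosmetic difference is that the paper writes a two-outcome version $\cI^p=\{p_1\mathfrak{I},p_2\mathfrak{I}\}$, $\cI^q=\{q_1\mathfrak{I},q_2\mathfrak{I}\}$ with an explicit joint instrument $\{r_{ij}\mathfrak{I}\}$, whereas you take the minimal single-outcome identity instrument; your closing remark already anticipates exactly this kind of multi-outcome dressing.
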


\begin{proof}
Consider two quantum instruments, $\cI^p=\{\Phi^p_1=p_1\mathfrak{I}, \Phi^p_2=p_2\mathfrak{I}\}$ and $\cI^q=\{\Phi^q_1=q_1\mathfrak{I}, \Phi^q_2=q_2\mathfrak{I}\}$ where $\mathfrak{I}$ is the identity channel and $p_i=\sum_{j}r_{ij}$ and $q_j=\sum_{i}r_{ij}$ for some $\{r_{ij}\geq 0\}_{ij=\{1,2\}}$ with $\sum_{ij}r_{ij}=1$. Clearly, $\cI^p$ and $\cI^q$ are traditionally compatible with the joint instrument $\cI^r=\{r_{ij}\mathfrak{I}\}_{ij=\{1,2\}}$. However, as discussed earlier, if $\cI^p$ and $\cI^q$ are parallelly compatible then $\Phi^p=\sum_i \Phi^p_i=\mathfrak{I}$ and $\Phi^q=\sum_j \Phi^q_j=\mathfrak{I}$ are compatible. But since the identity channel $\mathfrak{I}$ is not compatible with itself (due to the no-cloning theorem), $\Phi^p$ and $\Phi^q$ cannot be compatible, and therefore $\cI^p$ and $\cI^q$ cannot be parallelly compatible.
\end{proof}




\subsection{Arguments against traditional compatibility}\label{subsec:main_2}

In the previous section, we have introduced two notions of instrument compatibility and showed that these notions are conceptually different (neither of them implies the other). Here, we argue that the traditional notion has significant drawbacks.

Let us recall from Section \ref{Sec:preli} that measurements are devices with a quantum input and a classical output, while channels are devices with a quantum input and a quantum output. Furthermore, we say that a pair of such devices is compatible if there exists a joint device that upon taking a quantum input, reproduces \textit{both} of the outputs of the original devices. For measurements, this means that the joint measurement produces a classical output that is the joint measurement outcome of the two compatible measurements. For channels, this means that the joint channel produces a quantum output that is the joint state of the outputs of the compatible channels. According to this principle, when one is looking for a definition of compatibility of instruments, one should look for a joint instrument that reproduces \emph{both} the joint classical and the joint quantum output of the compatible instruments.

It is clear from Definition \ref{def:Trad _compa} that the traditional notion of instrument compatibility provides a joint instrument with \emph{a single quantum output}. Thus, by design, the traditional definition does not allow for producing a \textit{simultaneous} quantum output of \emph{both} of the compatible quantum instruments. Furthermore, this definition only applies to instruments with the same output Hilbert space. Note that for traditionally compatible instruments, one can only recover a single quantum output via classical post-processing. This is not the case for parallel compatibility, where the joint instrument produces a joint state, whose marginals coincide with the quantum outputs of the compatible instruments. Indeed, after performing the joint instrument, one has access to \emph{both} of the quantum outputs, and one can perform further operations on both of them simultaneously. To illuminate this argument, we recall Proposition \ref{proposi:tradi_n_para}, which shows that two ``identity instruments'' (instruments with all their channels being proportional to the identity channel) are traditionally compatible. In any notion of compatibility for which the joint instrument recovers both of the quantum outputs, this would mean that one can recover two copies of the output of an identity channel, which is in clear contradiction with the no-cloning theorem. Thus, we argue that the traditional notion of instrument compatibility does not capture compatibility in the same way as the well-established notions of measurement and channel compatibility do.

As further justification for our argument, the following two propositions show that while traditional compatibility of instruments captures measurement compatibility, it can never capture channel compatibility.
\begin{proposition}
Two measurements $A$ and $B$ are compatible if and only if there exist an $A$-compatible instrument $\cI_A$ and a $B$-compatible instrument $\cI_B$ such that $\cI_A$ and $\cI_B$ are traditionally compatible.
\end{proposition}

\begin{proof}
Suppose that there exists an $A$-compatible instrument $\cI_A=\{\Phi^{A}_x:\cS(\cH)\rightarrow\cL^+(\cK)\}$ and a $B$-compatible instrument $\cI_B=\{\Phi^B_y:\cS(\cH)\rightarrow\cL^+(\cK)\}$ such that $\cI_A$ and $\cI_B$ are traditionally compatible. Then there exists an instrument $\cI=\{\Phi_{xy}:\cS(\cH)\rightarrow\cL^+(\cK)\}$ such that  $\sum_y\Phi_{xy}=\Phi^{1}_x$ and $\sum_x\Phi_{xy}=\Phi^{2}_y$ for all $x,y$. Let us choose a measurement $G=\{G(x,y)\}$ such that $\cI$ is $G$-compatible (recall that such a measurement always exists). Then for any $\rho\in\cS(\cH)$, we have that $\tr[\rho G(x,y)]=\tr[\Phi_{xy}(\rho)]$ for all $x,y$ and for all $\rho\in\cS(\cH)$. Therefore, $\sum_y\tr[\rho G(x,y)]=\tr[\Phi^A_{x}(\rho)]=\tr[\rho A(x)]$ and $\sum_x\tr[\rho G(x,y)]=\tr[\Phi^B_{xy}(\rho)]=\tr[\rho B(y)]$ for all $x,y$ and for all $\rho\in\cS(\cH)$. Therefore, $A$ and $B$ are compatible with the joint measurement $G=\{G(x,y)\}$.

Now suppose that $A$ and $B$ are compatible with the joint measurement $G=\{G(x,y)\}$, and let $\cI=\{\Phi_{xy}\}$ be a $G$-compatible instrument [e.g., choose the L\"uders instrument, $\Phi_{xy}(\rho) = \sqrt{G(x,y)} \rho \sqrt{G(x,y)}$]. Then it is easy to check that the instrument $\cI^A \equiv \{\Phi^A_{x}=\sum_y\Phi_{xy}\}$ is an $A$-compatible instrument and $\cI^B \equiv \{\Phi^B_{x}=\sum_y\Phi_{xy}\}$ is a $B$-compatible instrument.
\end{proof}

\begin{proposition}
The traditional compatibility of quantum instruments cannot capture the compatibility of quantum channels.
\end{proposition}

\begin{proof} 
Take two instruments, $\cI_A=\{\Phi^{A}_x:\cS(\cH)\rightarrow\cL^+(\cK)\}$ and $\cI_B=\{\Phi^B_y:\cS(\cH)\rightarrow\cL^+(\cK)\}$ such that they are traditionally compatible. Taking $\Phi^A \equiv \sum_x\Phi^{A}_x$ and $\Phi^B = \sum_y\Phi^B_y$, we recall that traditional compatibility implies $\Phi^A=\Phi^B$. Therefore, traditional compatibility cannot capture channel compatibility in full generality, since there exist compatible channels that are not equal.
\end{proof}

\subsection{Arguments for parallel compatibility}\label{subsec:main_3}

In this section, we argue that parallel compatibility does not have the flaws of traditional compatibility. In the previous section, we already argued for this from the conceptual viewpoint---that is, parallel compatibility allows for the simultaneous recovery of both of the quantum outputs of the compatible instruments. Here, we further justify the adequacy of parallel compatibility by showing that this notion captures the idea of measurement compatibility, channel compatibility and measurement-channel compatibility. We summarise these findings in the following theorem.

\begin{theorem}
Parallel compatibility of instruments captures measurement compatibility, measurement compatibility and measurement-channel compatibility.

1.~Two measurements, $A$ and $B$, are compatible if and only if there exist an $A$-compatible instrument $\cI_A$ and a $B$-compatible instrument $\cI_B$ such that $\cI_A$ and $\cI_B$ are parallelly compatible.

2.~Two quantum channels, $\Phi^1:\cS(\cH)\rightarrow\cS(\cK_1)$ and $\Phi^2:\cS(\cH)\rightarrow\cS(\cK_2)$, are compatible if and only if there exist two parallelly compatible instruments $\cI_1=\{\Phi^{1}_x:\cS(\cH)\rightarrow\cL^+(\cK_1)\}$ and $\cI_2=\{\Phi^2_y:\cS(\cH)\rightarrow\cL^+(\cK_2)\}$ such that $\sum_x\Phi^{1}_x=\Phi^1$ and $\sum_y\Phi^2_y=\Phi^2$.

3.~If an $A$-compatible instrument $\cI_A=\{\Phi^{A}_x:\cS(\cH)\rightarrow\cL^+(\cK_1)\}$ and a $B$-compatible instrument $\cI_B=\{\Phi^B_y:\cS(\cH)\rightarrow\cL^+(\cK_2)\}$, such that $\sum_x\Phi^{A}_x=\Phi^A$ and $\sum_y\Phi^B_y=\Phi^B$, are parallelly compatible, then $A$ and $B$ are both compatible with both $\Phi^A$  and $\Phi^B$.
\label{th:u}
\end{theorem}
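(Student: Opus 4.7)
The plan is to prove the three statements in order, exploiting a common dictionary: an instrument $\{\Phi_x\}$ recovers an observable $A$ via $A(x) = \Phi_x^*(\Id)$ on the output space, and partial-trace relations between CP maps translate into relations between their duals evaluated at the identity (the last paragraph of the Quantum Channels subsection). This dictionary is what I will use repeatedly.

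For statement (1), the reverse direction is the argument of Lemma \ref{l:p-o} adapted to traditional (rather than parallel) joint instruments. Starting from a traditional joint instrument $\cI = \{\Phi_{xy}\}$ with marginals $\Phi^A_x = \sum_y \Phi_{xy}$ and $\Phi^B_y = \sum_x \Phi_{xy}$, I would set $G(x,y) = \Phi_{xy}^*(\Id_\cK)$; positivity follows from CP of $\Phi_{xy}$, normalization $\sum_{xy} G(x,y) = \Id_\cH$ from $\sum_{xy} \Phi_{xy}$ being CPTP, and the marginal identities $\sum_y G(x,y) = (\Phi^A_x)^*(\Id_\cK) = A(x)$ and $\sum_x G(x,y) = B(y)$ from $A$- and $B$-compatibility of $\cI_A, \cI_B$. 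For the forward direction, given a joint observable $\cG$ for $A$ and $B$, I would take the trivial measure-and-prepare instrument $\Phi_{xy}(\rho) = \tr[\rho\, G(x,y)]\,\sigma_0$ for an arbitrary fixed state $\sigma_0 \in \cS(\cK)$; its two marginals are $\Phi^A_x(\rho) = \tr[\rho\, A(x)]\,\sigma_0$ and $\Phi^B_y(\rho) = \tr[\rho\, B(y)]\,\sigma_0$, visibly $A$- and $B$-compatible, and $\{\Phi_{xy}\}$ is a traditional joint instrument by construction.

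For statement (2), the reverse direction follows by summing over outcomes: if $\cI = \{\Phi_{xy}\}$ is a parallel joint instrument with $\sum_y \tr_{\cK_2}\Phi_{xy} = \Phi^1_x$ and $\sum_x \tr_{\cK_1}\Phi_{xy} = \Phi^2_y$, then $\Lambda = \sum_{xy}\Phi_{xy}$ is a channel satisfying $\tr_{\cK_2}\Lambda = \Phi^1$ and $\tr_{\cK_1}\Lambda = \Phi^2$, which is channel-channel compatibility. For the forward direction, given a joint channel $\Lambda$ of $\Phi^1$ and $\Phi^2$, I would take the single-outcome instruments $\cI_1 = \{\Phi^1\}$, $\cI_2 = \{\Phi^2\}$, and joint $\cI = \{\Lambda\}$; parallel compatibility reduces verbatim to the defining identities of channel-channel compatibility. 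Example \ref{example:parallel_compatibility} supplies many nontrivial witnesses, but the trivial one suffices.

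Statement (3) is the heart of the theorem and is where the main bookkeeping lies. Starting from a parallel joint instrument $\cI = \{\Phi_{xy}\}$ for the $A$-compatible $\cI_A$ and the $B$-compatible $\cI_B$, Lemma \ref{l:p-o} already yields observable-observable compatibility of $A$ and $B$. Observable-channel compatibility of $A$ with $\Phi^A$ is witnessed by $\cI_A$ itself, and that of $B$ with $\Phi^B$ by $\cI_B$. The two cross pairings are the real work. To show $A$ compatible with $\Phi^B$, I would define $\Psi_x = \sum_y \tr_{\cK_1}\Phi_{xy}:\cS(\cH)\to\cL^{+}(\cK_2)$ and check two properties of $\{\Psi_x\}$: first, $\sum_x \Psi_x = \sum_y \Phi^B_y = \Phi^B$, so it is an instrument with total channel $\Phi^B$; and second, $\Psi_x^*(\Id_{\cK_2}) = A(x)$, so it is $A$-compatible. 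The key identity is the second: writing $\Gamma_x = \sum_y \Phi_{xy}$, the preliminary dual relation for partial traces gives $\Psi_x^*(\Id_{\cK_2}) = \Gamma_x^*(\Id_{\cK_1\otimes\cK_2})$ (from the $\tr_{\cK_1}$ marginal) and $\Gamma_x^*(\Id_{\cK_1\otimes\cK_2}) = (\Phi^A_x)^*(\Id_{\cK_1}) = A(x)$ (from the $\tr_{\cK_2}$ marginal combined with $A$-compatibility of $\cI_A$). A symmetric construction with $\Psi'_y = \sum_x \tr_{\cK_2}\Phi_{xy}$ shows $B$ compatible with $\Phi^A$. The main obstacle is purely notational: keeping straight which partial trace applies and in which direction the dual identity points when the outcome sum and the two partial traces are nested in different orders.
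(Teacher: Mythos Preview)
Your proposal is correct and follows essentially the same route as the paper. The only cosmetic differences are that for the forward direction of (1) you exhibit a concrete measure-and-prepare $\cG$-compatible instrument $\Phi_{xy}(\rho)=\tr[\rho\,G(x,y)]\,\sigma_0$ whereas the paper simply invokes ``a $\cG$-compatible instrument'' and takes its marginals, and for (3) you phrase the key identity $\Psi_x^*(\Id_{\cK_2})=A(x)$ in the Heisenberg/dual picture while the paper carries out the equivalent Schr\"odinger-picture trace computation $\tr_{\cK_2}[\Psi_x(\rho)]=\tr_\cH[\rho A(x)]$ directly.
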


\begin{proof}
We start with the ``if'' part of statement 1. Suppose that there exists an $A$-compatible quantum instrument $\cI_A=\{\Phi^{1}_x:\cS(\cH)\rightarrow\cL^+(\cK_1)\}$ and  a $B$-compatible quantum instrument $\cI_B=\{\Phi^2_y:\cS(\cH)\rightarrow\cL^+(\cK_2)\}$ such that they are parallelly compatible. Then there exists an instrument  $\cI=\{\Phi_{xy}:\cS(\cH)\rightarrow\cL^+(\cK_1\otimes\cK_2)\}$ such that $\sum_y\tr_{\cK_2}\Phi_{xy}=\Phi^{1}_x$ and $\sum_x\tr_{\cK_1}\Phi_{xy}=\Phi^{2}_y$ for all $x,y$. Since $\cI_A$ is $A$-compatible, we have that $\tr[\rho A(x)]=\tr[\Phi^{1}_x(\rho)]=\tr[\rho(\Phi^{1}_x)^\ast (\Id_{\cK_1})]$ for all $\rho \in \cS(\cH)$ and all $x$. This implies
\begin{equation}
\sum_y\Phi_{xy}^\ast (\Id_{\cK_1\otimes\cK_2})=(\Phi^{1}_x)^\ast (\Id_{\cK_1})=A(x) \quad \forall x,
\end{equation}
where the first equality is a consequence of Observation \ref{obs:marginal_dual}. Similarly, we have that $\sum_x\Phi_{xy}^*(\Id_{\cK_1\otimes\cK_2})=(\Phi^{2}_y)^*(\Id_{\cK_2})=B(y)$ for all $y$.

Now we define the measurement $G=\{G(x,y)=\Phi_{xy}^*(\Id_{\cK_1\otimes\cK_2})\}$ (which is the unique measurement compatible with $\cI$).
For this measurement, we have that $\sum_{y}G(x,y)=\sum_y\Phi_{xy}^*(\Id_{\cK_1\otimes\cK_2})=(\Phi^{1}_x)^*(\Id_{\cK_1})=A(x)$ and $\sum_{x}G(x,y)=\sum_x\Phi_{xy}^*(\Id_{\cK_1\otimes\cK_2})=(\Phi^{1}_y)^*(\Id_{\cK_2})=B(y)$ and therefore $A$ and $B$ are compatible via the joint measurement $G$.

Now we move on to the ``only if'' part. Let $\{A(x)\}$ and $\{B(y)\}$ be compatible measurements on the Hilbert space $\cH$.
Let $\{G(x,y)\}$ denote a joint measurement for $A$ and $B$, and consider the Naimark dilation $\{\Pi(x,y)\}$ on the Hilbert space $\cK \equiv \cH \otimes \cH'$. That is, for every state $\rho \in \cS(\cH)$ we have that $\tr[ G(x,y) \rho ] = \tr[ \Pi(x,y) ( \rho \otimes \ketbraq{0} )]$ for some fixed state $\ket{0}$ on $\cH'$, and $\{\Pi(x,y)\}$ is a projection-valued measure (PVM), i.e., $\Pi^2(x,y) = \Pi(x,y)$ for all $x,y$. Furthermore, consider a rank-1 ``fine-graining'', $\tilde{\Pi}(z) = \ketbraq{\phi_z}$, of $\Pi(x,y)$, i.e., a rank-1 projective measurement such that
\begin{equation}
\Pi(x,y) = \sum_{z \in P(x,y)} \tilde{\Pi}(z),
\end{equation}
where P(x,y) is the subset of all the possible values of z such that $\tilde{\Pi}(z)$ is in the support of $\Pi(x,y)$.

Consider the instrument
\begin{equation}
\cI_{\tilde{\Pi}} = \{ \Phi^{\tilde{\Pi}}_{z}: \cS(\cH) \to \cL^+(\cK) ~|~ \Phi^{\tilde{\Pi}}_{z}(\rho) = \tilde{\Pi}(z) (\rho \otimes \ketbraq{0}) \tilde{\Pi}(z) \}
\end{equation}
defined by the channels mapping $\rho$ to the (un-normalised) post-measurement state of the dilated and fine-grained measurement $\tilde{\Pi}$. It is clear that $\Phi^{\tilde{\Pi}}_{z}$ is CP with the single Kraus operator $\tilde{\Pi}(z)( \I_\cH \otimes \ket{0}_{\cH'})$. Further, it is also clear that $\Phi^{\tilde{\Pi}} \equiv \sum_{z} \Phi^{\tilde{\Pi}}_{z}$ is CPTP, since
\begin{align}
\tr[ \Phi^{\tilde{\Pi}}( \rho ) ] &= \sum_{z} \tr[ \tilde{\Pi}(z)( \rho \otimes \ketbraq{0} ) \tilde{\Pi}(z) ]\nonumber\\
 &= \tr[ \sum_{z} \tilde{\Pi}(z)( \rho \otimes \ketbraq{0} )]\nonumber\\
 & = \tr[ \rho \otimes \ketbraq{0} ] \nonumber\\
 & = \tr \rho
\end{align}
for all $\rho$.

Since $\tilde{\Pi}$ is a PVM, its unnormalised post-measurement states, $\tilde{\rho}_{z} \equiv \tilde{\Pi}(z) (\rho \otimes \ketbraq{0}) \tilde{\Pi}(z) \in \cL( \cK )$, are rank-1 and pairwise orthogonal. Explicitly, they are given by
\begin{equation}
\tilde{\rho}_{z} = \tilde{\Pi}(z) (\rho \otimes \ketbraq{0}) \tilde{\Pi}(z) = \lambda_z(\rho) \ketbraq{\phi_z},
\end{equation}
where $\lambda_z(\rho) = \tr[ \tilde{\Pi}(z) (\rho \otimes \ketbraq{0})]$. Consider then an isometry $V : \cK \to \cK_1 \otimes \cK_2$ (with $\cK_1 \cong \cK_2 \cong \cK$) such that
\begin{equation}
V \ket{\phi_z}_{\cK} = \ket{\phi_z}_{\cK_1} \otimes \ket{\phi_z}_{\cK_2} \quad \forall z,
\end{equation}
which always exists, since one can always clone a set of fixed orthogonal states. Hence,
\begin{equation}
V \tilde{\rho}_z V^\dagger = \lambda_z(\rho) \ketbraq{\phi_z}_{\cK_1} \otimes \ketbraq{\phi_z}_{\cK_2}.
\end{equation}
Let us then define the instrument
\begin{align}
\cI = \{ \Phi_{z} : \cS(\cH) \to \cL^+( \cK_1 \otimes \cK_2 )\}
\end{align}
with $\Phi_{z}(\rho) = V \tilde{\Pi}(z) (\rho \otimes \ketbraq{0}) \tilde{\Pi}(z) V^\dagger = \lambda_z(\rho) \ketbraq{\phi_z}_{\cK_1} \otimes \ketbraq{\phi_z}_{\cK_2}$. Since the $\Phi_{z}$ are just the composition of $\Phi^{\tilde{\Pi}}_{z}$ with the isometry $V$, it is clear that these are also CP maps and that $\Phi \equiv \sum_{z} \Phi_{z}$ is a CPTP map, and hence $\cI$ is a valid instrument.

Let us now define the instruments
\begin{align}
\cI_A & \equiv \{ \Phi^A_x : \cS(\cH) \to \cL^+(\cK_1)\}
\end{align}
where $\Phi^A_x(\rho) = \sum_y \sum_{z \in P(x,y)} \tr_{ \cK_2 } \Phi_{z} (\rho) = \sum_y \sum_{z \in P(x,y)} \lambda_z(\rho) \ketbraq{\phi_z}_{\cK_1}$ and
\begin{align}
\cI_B &  \equiv \{ \Phi^B_y : \cS(\cH) \to \cL^+(\cK_2)\}
\end{align}
where $\Phi^B_y(\rho) = \sum_x \sum_{z \in P(x,y)} \tr_{ \cK_1 } \Phi_{z} (\rho) = \sum_x \sum_{z \in P(x,y)} \lambda_z(\rho) \ketbraq{\phi_z}_{\cK_2} \}$.
It is clear that these are valid instruments, and by definition, $\cI_A$ and $\cI_B$ are parallelly compatible with the joint instrument $\cI$. Furthermore, we have that
\begin{align}
\tr[ \Phi^A_x( \rho ) ] & = \tr[ \sum_y \sum_{z \in P(x,y)} \lambda_z(\rho) \ketbraq{\phi_z}_{\cK_1} ] \nonumber\\
&=\sum_y \sum_{z \in P(x,y)} \lambda_z(\rho)\nonumber\\
&=\sum_y \sum_{z \in P(x,y)} \tr[ \tilde{\Pi}(z)( \rho \otimes \ketbraq{0} ) ]\nonumber\\
&=\sum_y \tr[ \Pi(x,y)( \rho \otimes \ketbraq{0} ) ]\nonumber\\
&=\sum_y \tr[ G(x,y) \rho ]\nonumber\\
&= \tr[ A(x) \rho ],
\end{align}
that is, $\cI_A$ is $A$-compatible, and similarly we have that $\cI_B$ is $B$-compatible. This finishes the proof of statement~1.

We continue with the proof of statement 2. To show the ``only if'' part, notice that a quantum channel $\Lambda$ can be considered as a single-outcome quantum instrument $\cI_{\Lambda}$. Therefore, the compatibility of two quantum channels $\Phi^1$ and $\Phi^2$ implies the parallel compatibility of the two instruments $\cI_{1} \equiv \{\Phi^1\}$ and $\cI_{2} \equiv \{\Phi^2\}$. The ``if'' part follows straightforwardly from the definition of parallel compatibility, and it is already explained below Definition \ref{Def.par_comp}.

Last, we prove statement 3. By definition, $A$ is compatible with $\Phi^A$ and $B$ is compatible with $\Phi^B$. Since $\cI_A$ and $\cI_B$ are parallelly compatible, there exists a quantum instrument $\cI =\{\Phi_{xy}:\cS(\cH)\rightarrow\cL^+(\cK_1\otimes\cK_2)\}$ such that $\Phi^{A}_x=\sum_y\tr_{\cK_2}\Phi_{xy}$ and $\Phi^{B}_y=\sum_x\tr_{\cK_1}\Phi_{xy}$. Since $\cI_A$ is an $A$-compatible instrument, we have that for all $\rho\in\cS(\cH)$ and $x\in\Omega_A$,
\begin{align}
\tr_{\cH}[\rho A(x)]&=\tr_{\cK_1}[\Phi^A_x(\rho)]\nonumber\\
&=\tr_{\cK_1}[\sum_y\tr_{\cK_2}[\Phi_{xy}(\rho)]\nonumber\\
&=\sum_y[\tr_{\cK_1}\tr_{\cK_2}[\Phi_{xy}(\rho)]\nonumber\\
&=\tr_{\cK_2}[\sum_y[\tr_{\cK_1}[\Phi_{xy}(\rho)]]\nonumber\\
&=\tr_{\cK_2}[\Phi^{\prime A}_{x}(\rho)],\nonumber\\
\end{align}
where $\Phi^{\prime A}_{x} \equiv \sum_y \tr_{\cK_1} \Phi_{xy}$. Clearly, $\Phi^{\prime A}_x:\cS(\cH)\rightarrow\cL^+(\cK_2)$ for all $x\in\Omega_A$, and
\begin{align}
\sum_x\Phi^{\prime A}_{x}&=\sum_x\sum_y\tr_{\cK_1}\Phi_{xy}\nonumber\\
&=\sum_y\Phi^B_y\nonumber\\
&=\Phi^B,
\end{align}
and thus, $\cI^{\prime}_A \equiv \{\Phi^{\prime A}_x:\cS(\cH)\rightarrow\cL^+(\cK_2)\}$ is a quantum instrument. Hence, $A$ and $\Phi^B$ are compatible through the instrument $\cI^{\prime}_A$. Similarly one can prove that $B$ and $\Phi^A$ are compatible as well.
\end{proof}

Thus, we have proved that parallel compatibility captures the three kinds of compatibilities between basic quantum devices, that is, measurement compatibility, channel compatibility and measurement-channel compatibility.

Last, we point out that while every instrument has a unique measurement it is compatible with, the converse is not true: there are multiple instruments that are compatible with the same measurements (corresponding to different implementations of the same measurement). One consequence of this is that while for every pair of compatible measurements $A$ and $B$ there exists an $A$-compatible instrument $\cI_A$ and a $B$-compatible instrument $\cI_B$ such that they are parallelly compatible (statement 1.~of Theorem \ref{th:u}), not \textit{every} $A$-compatible and $B$-compatible instrument will be parallelly compatible. Even for a single measurement, two different instruments that are compatible with it, may not be parallelly compatible, as the following example shows:

\begin{example}[Two parallelly incompatible instruments associated with the same measurement]
A trivial measurement $J=\{J(x)=p_x\Id\}$ is compatible with any quantum channel $\Lambda$ through the instrument $\cI_{J,\Lambda}=\{p_x\Lambda\}$ \cite{heino-review}. Let us consider a channel $\Gamma$, which is incompatible with $\Lambda$. Clearly, $J$ is also compatible with the quantum channel $\Gamma$ through the instrument $\cI_{J,\Gamma}=\{p_x\Gamma\}$. From Theorem \ref{th:u}, we know that if two instruments are parallelly compatible, then their corresponding channels are compatible. Then, since $\Gamma$ and $\Lambda$ were chosen to be incompatible, the instruments $\cI_{J,\Gamma}$ and $\cI_{J,\Lambda}$ cannot be parallelly compatible.
\end{example} 

\section{Conclusion}\label{sec:conc}

In this paper, we introduced the concept of parallel compatibility of instruments and showed that this concept is different from the traditional definition of instrument compatibility. We argued that the traditional definition of compatibility of instruments is conceptually incomplete, and provided arguments for the adequacy of parallel compatibility. We showed that the definition of parallel compatibility of quantum instruments can capture the idea of measurement compatibility, channel compatibility and measurement-channel compatibility.

The notion of parallel compatibility may be relevant to various information theoretic tasks. First, suppose that Charlie wants to simultaneously transfer information to two parties, Alice and Bob, separated by a long distance. The information is transmitted through a quantum state, and Alice is retrieving the information through a measurement $A$, while Bob is retrieving the information through a measurement $B$. Then, if $A$ and $B$ are compatible, Fig.~\ref{fig.par_comp} suggests that the transmission can be done via the joint instrument of the corresponding instruments $\cI_A$ and $\cI_B$.
Second, consider the same scenario in a cryptographic setting, where Charlie and Alice aim to perform a key distribution task, and Bob is an eavesdropper. Then, Fig.~\ref{fig.par_comp} suggests that cloning-type attacks can be modelled through parallel compatibility of quantum instruments. We leave the exploration of the role of parallel compatibility in such information theoretic tasks, as well as the further characterisation of compatible instruments for future work.

\section{Acknowledgements}
The authors thank Prof.~Sibasish Ghosh and Erkka Haapasalo for fruitful discussions and their valuable comments on this work. MF acknowledges funding from the Government of Spain (FIS2020-TRANQI and Severo Ochoa CEX2019-000910-S), Fundació Cellex, Fundació Mir-Puig and Generalitat de Catalunya (CERCA, AGAUR SGR 1381). This project has received funding from the European Union’s Horizon 2020 research and innovation programme under the Marie Sk\l odowska-Curie grant agreement No 847517.

\end{document}